\newtheorem{theorem}{Theorem}
\begin{document}

% \title{The Rewiring Strategy Effectively Changes The Network Degree Correlation}
% \title{Attack Degree Correlation Coefficient Under Rewiring Strategy}
\title{Attacking The Assortativity Coefficient Under A Rewiring Strategy}
\author{Shuo Zou, Bo Zhou, and~Qi~Xuan,~\IEEEmembership{Senior Member,~IEEE,}
\thanks{This work was supported in part by the National Natural Science Foundation of China under Grant 61973273, by the Zhejiang Provincial Natural Science Foundation of China under Grant LR19F030001, by the National Key R\&D Program of China under Grant 2020YFB1006104, and by the Research and Development Center of Transport Industry of New Generation of Artificial Intelligence Technology. \emph{(Corresponding authors: Qi Xuan.)}}
\thanks{
All authors are with the Institute of Cyberspace Security, College of Information Engineering, Zhejiang University of Technology, Hangzhou 310023, China.} 
%B. Zhou is also with the Department of Intelligent Control, Zhejiang Institute of Communications, Hangzhou 311112, China.}
}
% \thanks{This work was supported in part by the Key R\&D Program of Zhejiang under Grant 2022C01018, by the National Natural Science Foundation of China under Grants U21B2001 and 61973273, by the Zhejiang Provincial Natural Science Foundation of China under Grant LR19F030001, and by the Research and Development Center of Transport Industry of New Generation of Artificial Intelligence Technology under Grant 202102H. \emph{(Corresponding author: Qi Xuan.)}}
% \author{Shuo Zou, 
%         Bo , Yongchao Mao, Jinhuan Wang, Shanqing Yu, and~Qi~Xuan,~\IEEEmembership{Member,~IEEE}
%~\IEEEmembership{Staff,~IEEE,}
        % <-this % stops a space
% \thanks{This paper was produced by the IEEE Publication Technology Group. They are in Piscataway, NJ.}% <-this % stops a space
% \thanks{Manuscript received April 19, 2021; revised August 16, 2021.}}

% The paper headers
% \markboth{Journal of \LaTeX\ Class Files,~Vol.~14, No.~8, August~2021}%
% {Shell \MakeLowercase{\textit{et al.}}: A Sample Article Using IEEEtran.cls for IEEE Journals}

% \IEEEpubid{0000--0000/00\$00.00~\copyright~2021 IEEE}
% Remember, if you use this you must call \IEEEpubidadjcol in the second
% column for its text to clear the IEEEpubid mark.

\maketitle

\begin{abstract}
Degree correlation is an important characteristic of networks, which is usually quantified by the assortativity coefficient. However, concerns arise about changing the assortativity coefficient of a network when networks suffer from adversarial attacks. In this paper, we analyze the factors that affect the assortativity coefficient and study the optimization problem of maximizing or minimizing the assortativity coefficient ($r$) in rewired networks with $k$ pairs of edges. We propose a greedy algorithm and formulate the optimization problem using integer programming to obtain the optimal solution for this problem. Through experiments, we demonstrate the reasonableness and effectiveness of our proposed algorithm. For example, rewired edges 10\% in the ER network, the assortativity coefficient improved by 60\%.

\end{abstract}

\begin{IEEEkeywords}Complex network, Adversarial attack, Assortativity coefficient.
\end{IEEEkeywords}

\section{Introduction}
\IEEEPARstart{C}{omplex} networks serve as powerful tools for abstractly representing real-world systems, where individual units are represented as nodes, and interactions between these units are represented as edges.
%\cite{newman2003structure,estrada2012structure}. 
Therefore, research on complex networks has experienced tremendous growth in recent years. Various network properties, including the degree sequence\cite{chung2002connected,chatterjee2011random}, degree correlation\cite{park2003origin,mahadevan2006systematic} and clustering coefficient\cite{saramaki2007generalizations,mcassey2015clustering} are extensively utilized in complex network analysis to assess the topological structure of networks. The degree correlation, one of the most intriguing properties in complex networks, describes the relationship between the degrees of connected nodes. This crucial topological characteristic plays an important role in network stability\cite{friedel2007influence}, attack robustness\cite{huang2011robustness}, network controllability\cite{zhang2019evolution}, information propagation\cite{vega2020influence} and other related phenomena\cite{noldus2013effect,zhou2014memetic,oh2018complex,osat2017optimal,olvera2021pagerank,li2021percolation}. The assortativity coefficient\cite{newman2002assortative,newman2003mixing} is a commonly used measure to quantify the degree correlation in complex networks.

There has been some research on degree correlation attacks, which can be broadly classified into two types: attacks that disrupt the degree sequence and attacks that preserve the degree sequence before and after the attack. Menche~\emph{et al.}\cite{menche2010asymptotic} analyzed in detail a class of maximally correlated scale-free networks. They found the asymptotic properties of degree-correlated scale-free networks. 
Srivastava~\emph{et al.}\cite{srivastava2012correlations} examined the effects of node or edge deletion on degree correlation and discovered that removing nodes or edges can transform an initially assortative network into a disassortative one, and vice versa. Xulvi~\emph{et al.}\cite{xulvi2005changing} proposed two algorithms that aim to achieve the desired degree correlation in a network by producing assortative and disassortative mixing, respectively.  Li~\emph{et al.}\cite{jing2016algorithm} developed a probabilistic attack method that increases the chances of rewiring the edges between nodes of higher degrees, leading to a network with a higher degree of assortativity. Geng~\emph{et al.}\cite{geng2021global} introduced a global disassortative rewiring strategy aimed at establishing connections between high-degree nodes and low-degree nodes through rewiring, resulting in a higher level of disassortativity within the network. However, the above work overlooks the fact that attackers typically intelligently obtain the initial state of the network structure. It is important to consider how to maximize or minimize the assortativity coefficient through a rewiring strategy based on the initial network structure.
 
In this paper, we investigate a greedy rewiring strategy designed to manipulate the assortativity coefficient of a network while preserving its degree sequence. We make the following contributions:
\begin{itemize}
    \item We define the assortativity coefficient attack and propose a rewiring strategy based on the original graph.
    \item We have shown that the objective function under the rewiring strategy is monotone and submodular.
    \item We propose a greedy rewiring strategy and demonstrate its effectiveness on 2 small real-world networks. Furthermore, we comprehensively evaluate our method by comparing it with $3$ baselines on $6$ datasets.
    
\end{itemize}

% The remainder of this paper is organized as follows. Section~\ref{sec} gives the definition of the problem, and Section~\ref{thre} describes the experimental results. Finally, we give the conclusion in Section~\ref{thi}.

\section{Problem statement}\label{sec}
\subsection{Preliminaries}
We consider a graph $G=(V,E)$, where the set of vertex $V$ is a set of $N$ nodes, and $E$ is a set of undirected edges $M$. The assortativity coefficient is a widely used measure to quantify the degree correlation in a network. 
The assortativity coefficient is defined as\cite{newman2003mixing}:
\begin{equation}
    \mathbf{r} = \frac{M^{-1}\sum_i^M(j_{i}k_{i})-[M^{-1}\sum_i^M\frac{1}{2}(j_{i}+k_{i})]^2}{M^{-1}\sum_i^M\frac{1}{2}(j_{i}^2+k_{i}^2)-[M^{-1}\sum_i^M\frac{1}{2}(j_{i}+k_{i})]^2}.
    \label{eq:1}
\end{equation}
where $k_i$ and $j_i$ are the degrees of the endpoints of edge $i$, respectively. 

The degree distribution is a crucial characteristic of a network as it reveals the connectivity patterns and the overall topology of the network. Therefore, we propose an attack method that preserves the degree of the node.
The rewiring strategy is shown in Figure~\ref{fig:99}. We choose the edge pair $(i,j) \in E$ and $(k,l)\in E$, which can be rewired as $(i,k)$ and $(j,l)$ if $(i,k), (j,l) \notin E$, or can be rewired as $(i,l)$ and $ (k,j)$ if $(i,l),(k,j) \notin E$.
Obviously, the rewiring strategy does not change the degree of the nodes. $\sum_i^{M}\frac{1}{2}(j_i^2+k_i^2)$ and $\sum_i^{M}\frac{1}{2}(j_i+k_i)$ are also unchanged under the rewiring strategy. According to Formula~\ref{eq:1}, our rewiring strategy only affects the following formula:
\begin{equation}
    \mathbf{p} = \sum_i^M(j_ik_i).
    \label{eq:02}
\end{equation}

When the edges ($i$,$j$) and ($k$,$l$) are rewired to form ($i$,$k$) and ($j$,$l$), the change in the assortativity coefficient can be converted to the change in $p$, calculated as $value_{\{(i,j),(k,l)\}}=(k_i k_k + k_j k_l)-(k_i k_j + k_k k_l)$.
\begin{figure}[ht]
%是可选项 h表示的是here在这里插入，t表示的是在页面的顶部插入
\centering
\includegraphics[scale=0.2]{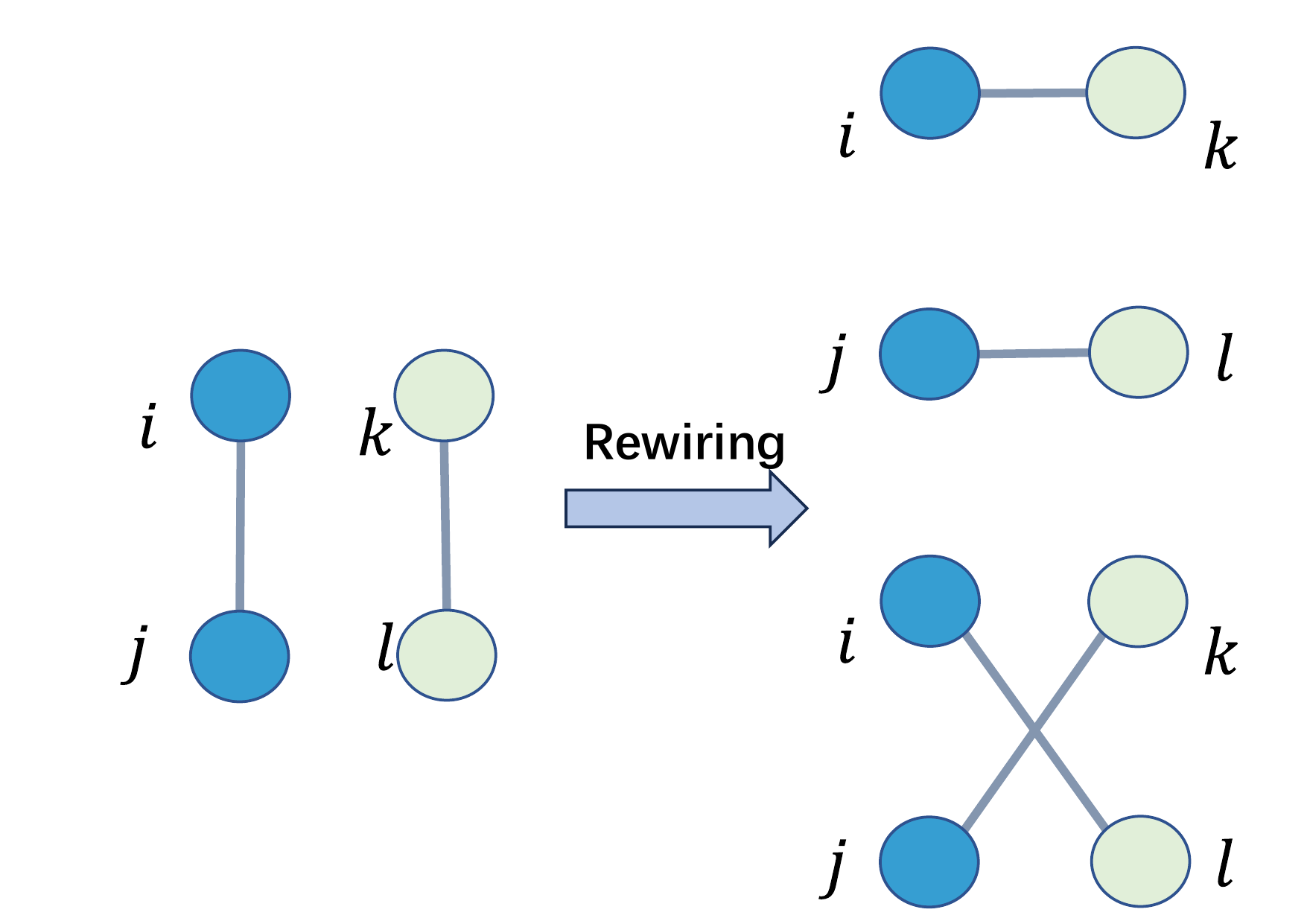}
\caption{The degrees of nodes $i$, $j$, $k$, and $l$ are $4$, $1$, $3$, and $2$, respectively. The rewiring of the edge pairs $(i,j)$ and $(k,l)$ can occur in two possible ways, corresponding to $value_{\{(i,j),(k,l)\}}=(4 \times  3 + 1 \times 2)-(4 \times 1 + 3 \times 2)=4$ and $value_{\{(i,j),(l,k)\}}=(4 \times  2 + 1 \times 3)-(4 \times 1 + 3 \times 2)=1$.}
\label{fig:99}
\end{figure}
\subsection{Problem Definition}
For a simple network $G(V, E)$, let $S$ be the set of rewired edge pairs. We denote the network after rewiring as $G+S$. The assortativity coefficient of $G+S$ is represented by $r(S)$, and the change in the assortativity coefficient can be expressed as $\Delta r(S)$.

Here, we assume that newly generated edge pairs resulting from rewiring will not be considered for further rewiring in subsequent steps.
%This includes both the edge pairs formed by the newly created edges along with other existing edges and the edge pairs that were not eligible for rewiring initially but now become eligible. 
Therefore, we can identify all potential edge pairs within the original graph without considering the additional components during the rewiring process. 

An adversary aims to maximize the change in the assortativity coefficient through a limited number of rewirings, including \textbf{ Maximum Assortative Rewiring (MAR)} and \textbf{Maximum Dissortative Rewiring (MDR)}. We define the following set function optimization problem: 

% \begin{equation*}
% \begin{split}
% &\max\mathrm{or}\min\quad r(G^{'})\\
% &\mbox{s.t.}\quad  \left\{\begin{array}{lc}
% G^{'} = G(S)\\
% |S|\le k\\
% S\subset EP
% \end{array}\right.
% \end{split}
% \end{equation*}
% =(V,E\setminus S)
\begin{equation}
    \underset{S \subset EP,|S|=k}{maximize} \quad |\Delta r(S)|.
    \label{eq:2}
\end{equation}

where $EP$ is a set of rewirable edges. Since the change in the assortativity coefficient can be converted to the change in $p$, the optimization problem (\ref{eq:2}) is equivalent to the following problem:
\begin{equation}
    \underset{S \subset EP,|S|=k}{maximize} \quad |\Delta p(S)|.
    \label{eq:3}
\end{equation}
In MAR, the set $EP$ consists of all possible rewired edge pairs with a positive $value$ in the original $G$. In MDR, the set $EP$ consists of all possible rewired edge pairs with negative $value$. These edge pairs in $EP$ satisfy two mutually exclusive conditions. First, the pair of edges formed by the same edge and other edges are mutually exclusive, as each edge can only be rewired once. Second, edge pairs that result in the same edge after rewiring are also mutually exclusive, since simple graphs do not allow multiple edges between the same pair of nodes. 
\begin{theorem}\label{theorem1}
In the MAR and MDR problems, $\Delta p(S)$, exhibits monotonic behavior.
\end{theorem}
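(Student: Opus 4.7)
The plan is to reduce the theorem to the observation that $\Delta p$ decomposes as a sum over the rewiring pairs in $S$, with each summand having a fixed sign determined by membership in $EP$. Since the rewiring operation preserves every node's degree, the quantities $k_i$ appearing in the definition of $value_{\{(i,j),(k,l)\}}$ are invariants of the graph; hence each $value$ is a constant depending only on the four endpoint degrees in the original $G$, not on the order in which rewirings are performed or on which other rewirings are applied alongside it.

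First, I would establish the additivity $\Delta p(S)=\sum_{e\in S}value_e$. The quantity $p=\sum_i j_i k_i$ is a sum over edges, and each rewiring pair $e=\{(i,j),(k,l)\}\in S$ deletes exactly two edges $(i,j),(k,l)$ from $E$ and inserts two new edges (either $(i,k),(j,l)$ or $(i,l),(j,k)$), contributing an additive change of exactly $value_e$ to $p$. The two mutual-exclusivity conditions listed in the problem definition guarantee that the edges touched by distinct pairs in $S$ are disjoint and that the newly inserted edges do not collide, so the contributions from different elements of $S$ do not overlap. Combined with the fact that degrees are preserved, this shows that the total change telescopes cleanly into the sum $\sum_{e\in S}value_e$, regardless of the processing order.

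Once the decomposition is in hand, monotonicity is immediate. In the MAR instance, every $e\in EP$ satisfies $value_e>0$ by definition of $EP$, so for any $S\subseteq T\subseteq EP$ we have $\Delta p(T)-\Delta p(S)=\sum_{e\in T\setminus S}value_e\ge 0$, i.e.\ $\Delta p$ is monotonically non-decreasing. Symmetrically, in the MDR instance every $value_e<0$, so $\Delta p$ is non-increasing and $|\Delta p|$ is non-decreasing in $S$. Thus in both problems $\Delta p(S)$ exhibits monotonic behavior with respect to set inclusion.

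The only subtle point, and the one I would give most care to, is the additivity step: one has to confirm that the two mutual-exclusivity conditions on $EP$ really do force the edge modifications of different pairs in $S$ to be independent, so that the $value$ contributions literally sum. Once that independence is verified the rest is bookkeeping, because the sign condition built into the construction of $EP$ makes monotonicity a one-line consequence.
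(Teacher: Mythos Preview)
Your proposal is correct and follows essentially the same approach as the paper: both rely on the identity $\Delta p(S\cup\{e\})=\Delta p(S)+value_e$ together with the sign constraint on $value_e$ built into $EP$. Your treatment is somewhat more careful, making explicit the additive decomposition $\Delta p(S)=\sum_{e\in S}value_e$ and justifying it via degree invariance and the mutual-exclusivity conditions, whereas the paper simply asserts the single-step increment and reads off monotonicity.
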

\begin{proof}
In MAR, for any given solution $S$, let us consider a pair of rewired edge pairs ${(i, j), (k, l)}$ in $G+S$ that can be rewired. The change in the assortativity coefficient, denoted $\Delta p(S \cup \{(i, j), (k, l)\})$, can be expressed as $\Delta p(S \cup \{\{(i, j), (k, l)\}\}) = \Delta p(S) + value_{\{(i, j), (k, l)\}}$. Since $value_{\{(i, j), (k, l)\}}>0$, it follows that $\Delta p(S \cup \{(i, j), (k, l)\}) > \Delta p(S)$, indicating that $p(S)$ is increasing monotonically. Similarly, it can be shown that in MDR, $\Delta p(S)$ is monotonically decreasing.
\end{proof}
\begin{algorithm}[!t]
  \caption{GRS} % 名称
  \begin{algorithmic}[1]
    \Require
      Graph $G=(V,E)$; an integer $k$
    \Ensure
       A set $S$ and $|S|=k$
    \If{GARS}
        \State $EP \leftarrow $ the set of possible rewired edge pairs with a positive $value$ in the original $G$, sorted in descending order.
    \EndIf
    \If{GDRS}
        \State $EP \leftarrow $ the set of possible pairs of rewired edges with a negative $value$ in the original $G$, sorted in ascending order.
    \EndIf
    \State $S \leftarrow \emptyset$
    \State $index \leftarrow 0$
    \State $n \leftarrow 0$
    \State $len \leftarrow length(rewiringEdgeList)$
    \While {$n < k$ and $index < len$}
        \State edge $(i,j),(k,l)  \leftarrow rewiringEdgeList [index]$
        \State $index \leftarrow index + 1$
        \If{the edges $(i,k)$ and $(j,l)$ can be rewired in $G$}
            \State $S \leftarrow S \cup \{\{(i,j),(k,l)\}\}$
            \State $G \leftarrow G+\{\{(i,j),(k,l)\}\}$
            \State $n \leftarrow n + 1$
        \EndIf
    \EndWhile
    \State
    \Return {$S$}
  \end{algorithmic}
  \label{alg:1}
\end{algorithm}
\begin{theorem}\label{theorem2}
In the MAR and MDR problems, $\Delta p(S)$ is submodular.
\end{theorem}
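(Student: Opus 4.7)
The plan is to establish submodularity via the standard diminishing-returns characterization: for every $A \subseteq B \subseteq EP$ and every $x=\{(i,j),(k,l)\} \in EP \setminus B$, I want to show $\Delta p(A\cup\{x\}) - \Delta p(A) \geq \Delta p(B\cup\{x\}) - \Delta p(B)$ for MAR, and the sign-flipped analogue for MDR, which becomes the same statement once we work with $|\Delta p|$. The central observation I would state up front, mirroring the setup in Section~II-A, is that $value_x$ depends only on the four degrees $k_i,k_j,k_k,k_l$, and that every rewiring preserves the entire degree sequence. Consequently the numerical contribution that applying $x$ makes to $\Delta p$ equals the fixed constant $value_x$ whenever $x$ can actually be applied on top of $S$, and equals $0$ otherwise. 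So the whole problem reduces to understanding how the feasibility of a single pair $x$ behaves as the rewiring set grows from $A$ to $B$.

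The key intermediate step is a short feasibility-antitonicity lemma: if $x$ is still a valid rewiring in $G+B$, then it is also valid in $G+A$. Validity of $x$ in $G+S$ requires two things: (a) the original edges $(i,j)$ and $(k,l)$ must still be present in $G+S$, i.e.\ neither has been consumed by a previous rewiring in $S$; (b) the two prospective new edges $(i,k)$ and $(j,l)$ must be absent from $G+S$, i.e.\ neither lies in $E$ nor has been manufactured by some other rewiring in $S$. Enlarging $S$ from $A$ to $B$ can only delete more of the original edges, jeopardizing (a), and can only create more additional edges, jeopardizing (b); so validity is antitone in $S$. The two mutual-exclusivity conditions spelled out after Equation~(3) are precisely the two modes in which this feasibility can fail.

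Given that lemma, the case split is immediate. If $x$ is feasible in $G+B$ it is also feasible in $G+A$, and both marginal gains equal $value_x$. If $x$ is feasible in $G+A$ but infeasible in $G+B$, the marginal gain in $A$ is $value_x$ while in $B$ it is $0$; the inequality then holds because $value_x>0$ in MAR (and $|value_x|\geq 0$ in the MDR version). If $x$ is infeasible in both, both marginal gains are $0$. The step I expect to be the main obstacle is condition (b): unlike (a), which is a clean ``removals only grow'' monotonicity that follows directly from the first mutual-exclusivity condition, the failure of (b) can be produced by a new edge jointly manufactured by two rewirings in $B\setminus A$, so I would isolate this point into a tight combinatorial sub-claim rather than lean on informal phrasing, to make sure the antitone feasibility statement is genuinely airtight.
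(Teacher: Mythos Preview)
Your approach is correct and considerably more careful than the paper's own argument. The paper dispatches the theorem in a single line: for $S\subseteq T$ and any admissible pair $x$, it simply asserts that both marginal gains equal $value_x$, i.e.\ it treats $\Delta p$ as a \emph{modular} function $\Delta p(S)=\sum_{x\in S} value_x$ and takes submodularity as an immediate corollary. In effect the paper regards the two mutual-exclusivity conditions stated after Equation~(\ref{eq:3}) as external feasibility constraints on which sets $S$ are admissible, not as something the objective itself must encode; under that reading there is nothing to prove beyond the observation that degrees are preserved, so each $value_x$ is a fixed constant independent of $S$.

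You instead bake feasibility into the function by setting the marginal gain of $x$ to $0$ whenever $x$ cannot be applied on top of $S$, and then establish an antitone-feasibility lemma to recover the diminishing-returns inequality. Your argument for that lemma is sound; in particular, the worry you flag about condition~(b) is resolved by the paper's standing assumption that newly generated edges are never re-rewired, so an edge manufactured by some pair in $A$ cannot subsequently disappear in $B\supseteq A$. The trade-off between the two routes: the paper's version is a one-line triviality but leaves the interaction between the modular objective and the exclusivity constraints unexamined (so the standard $(1-1/e)$ greedy bound does not follow directly, since the constraint set is not a bare cardinality constraint); your version does the extra combinatorial work that would actually be needed to justify a greedy guarantee over all of $EP$.
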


\begin{proof}
For each pair $S$ and $T$ of MAR such that $S \subseteq T$,and for each pair of rewired edge pairs ${(a, b), (c, d)}$ in $G(S)$ that satisfy the rewiring requirements, $p(S \cup \{(a, b),(c, d)\}) - p(S) = p(T \cup \{(a, b),(c, d)\})-p(T) = value_{(a,b),(c,d)}$, so $\Delta p(S)$ is submodular. This is also the case in MDR.
\end{proof}

\subsection{Attack Method}
Theorem~\ref{theorem1} and~\ref{theorem2} indicate that the objective function (\ref{eq:3}) is both monotone and submodular. As a result, a simple greedy strategy can be used to approximate the problem (\ref{eq:2}) with provable optimality bounds. We propose the \textbf{Greedy Rewiring Strategy(GRS)} to maximize or minimize the assortative coefficient.

\textbf{Greedy assortative rewiring strategy(GARS):}
First, identify all possible pairs of rewired edges with a positive $value$ in the original graph $G$. Initialize the set $S$ is empty. Then select the pair with the highest positive $value$ and try to rewire it. If successful, add it to $S$. if not, move on to the pair with the second highest $value$ and repeat the process until $|S|=k$.

\textbf{Greedy disassortative rewiring strategy(GDRS):}
Similarly, begin by finding all possible pairs of rewired edges with a negative $value$ in the original graph $G$. The set $S$ is empty. Next, choose the pair with the smallest negative $value$ and try to rewire it. If it can be rewired, add it to $S$. if not, proceed with the pair with the second-smallest $value$ and repeat the process until $|S|=k$.

The details of this algorithm are summarized in Algorithm \ref{alg:1}. In fact, the time complexity of the algorithm is $O(M^2\log(M))$, where $M$ represents the number of edges in the graph.

%% 写算法伪代码或者流程的前期准备
\renewcommand{\algorithmicrequire}{\textbf{Input:}}  % Use Input in the format of Algorithm
\renewcommand{\algorithmicensure}{\textbf{Output:}} % Use Output in the format of Algorithm

\section{Experiments}\label{thre}
In this section, we first validate the quality of the solution and the rationality of the GRS algorithm. We then compare it with several baseline methods to demonstrate that the GRS achieves the greatest change in the network's assortativity coefficient. We conducted experiments on both synthetic and real-world networks.
\subsection{Datasets}
We conducted our experiments on three popular model networks, Erdős–Rényi(ER)\cite{erdHos1960evolution} network, Watts-Strogatz(WS)\cite{watts1998collective} and Barabási-Albert(BA)\cite{barabasi1999emergence} network, and some realistic networks from KONECT\cite{kunegis2013konect}. The information of these networks is shown in Table \ref{tab1}.
\begin{table}[ht]
\centering
\caption{Statistics of datasets}\label{tab1}
\begin{tabular*}{\hsize}{@{}@{\extracolsep{\fill}}llcccccr@{}}
\hline
Networks & Nodes & Edges  & $\langle k \rangle$ & $r$ 
\\ \hline
ER   & 1000 & 5000& 10 & -0.004
\\ \hline
WS   & 1000 & 5000& 10 & -0.039
\\ \hline
BA   & 1000 & 4975& 9.95 & -0.056
\\ \hline
Karate   & 34 & 78& 4.489 & -0.476
\\ \hline
Dolphin   & 62 & 159& 5.129 & -0.044
\\ \hline
Powergrid   & 4941 & 6594& 2.669 & 0.003
\\ \hline
Netscience   & 1461 & 2742 & 3.753& 0.462
\\ \hline
Metabolic   & 1039 & 4741 & 9.126 &-0.250
\\ \hline
\end{tabular*}
\end{table} 
\subsection{Evaluating the Solution Quality and the Algorithm Rationality}
To evaluate the solution quality and rationality of our algorithm, we compared the results obtained from our algorithm with those of the optimal solutions, an improved version of our algorithm called \textbf{RenewGreedy} (which updates the candidate edge set after each rewiring), and the maximum or minimum assortativity coefficient of the network while preserving the degree. The evaluation was carried out on two small real networks, the Karate network and the Dolphin network.  

We employ integer programming formulations for the MAR and MDR problems to compute optimal solutions. Specifically, in the MAR problem, it is defined as follows:
\begin{equation*}
\begin{split}
&\max \,\,\sum_{ep \in EP}{value_{ep}x_{ep}} \\
&s.t.\quad  \left\{\begin{array}{lc}
\sum_{\{ep\in EP 
| (a,b) \in ep\}}
x_{ep} \leq 1 \,\,\,for \,\,each \, \,(a,b) \in E\\
\sum_{\{ep\in EP | (a,b) \in ep_{r}\}}
x_{ep}\leq 1 \,\,\,for \,\,each\,\, (a,b) \in E_{r}\\
\sum_{ep \in EP} x_{ep} \leq k
\\
x_{ep} \in \{0,1\} \,\,\,for\,\,each\,\,ep \in EP \\
\end{array}\right.
\end{split}
\end{equation*}
$E_r$ is a set of new edges generated after rewiring the elements in $EP$, and $ep_r$ represents the set of new edges generated after rewiring $ep$.  The first constraint ensures that each edge in the original graph can only be rewired once. The second constraint ensures that each new edge is only generated once. Similarly, in the MDR problem, we aim to minimize the objective function.

\begin{figure} [ht]%调节图片位置，h：浮动；t：顶部；b:底部；p：当前位置]
	\centering
	\subfloat[Karate]{
		\includegraphics[width=0.4\linewidth]{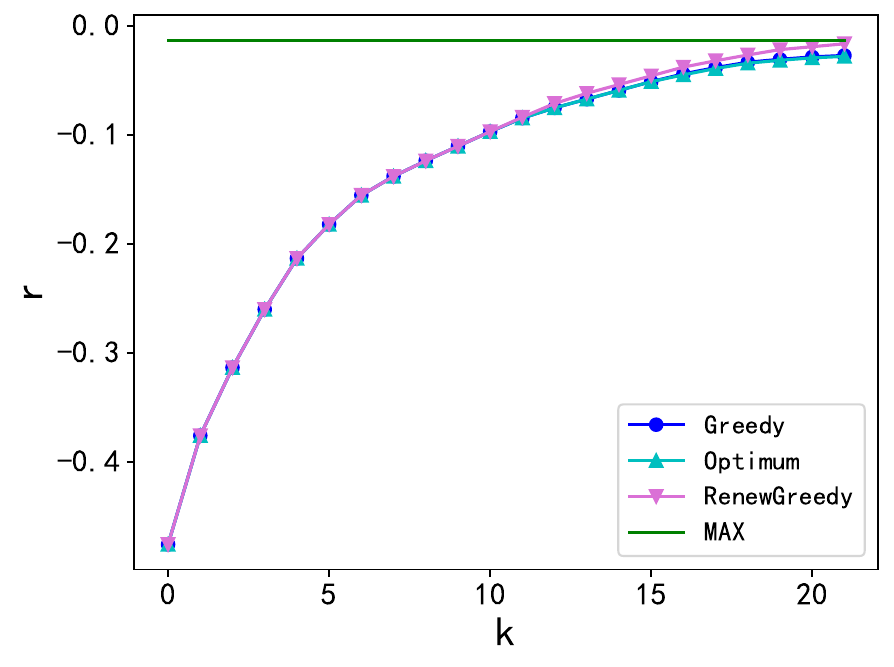}}
  \hspace{0.5cm}
	\subfloat[Karate]{
		\includegraphics[width=0.4\linewidth]{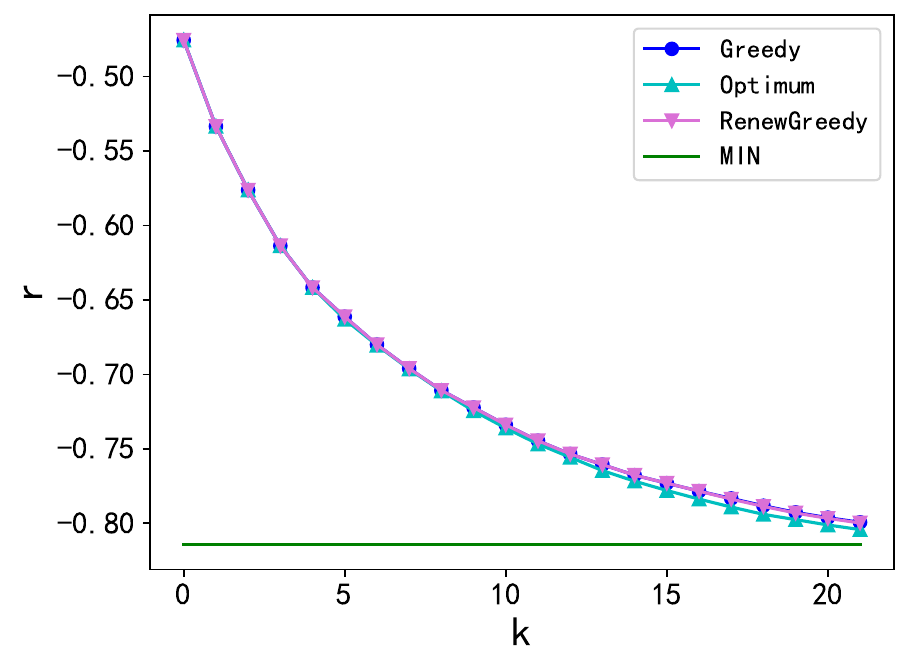} }
  \vspace{-0.3cm}
	\\
         \subfloat[Dolphin]{
		\includegraphics[width=0.4\linewidth]{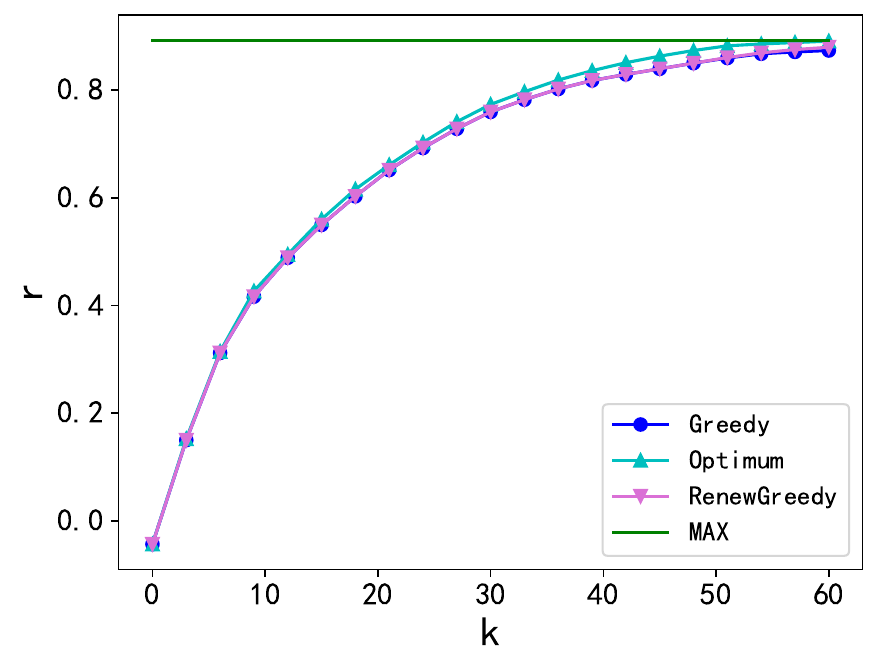}}
    \hspace{0.5cm}
	\subfloat[Dolphin]{
		\includegraphics[width=0.4\linewidth]{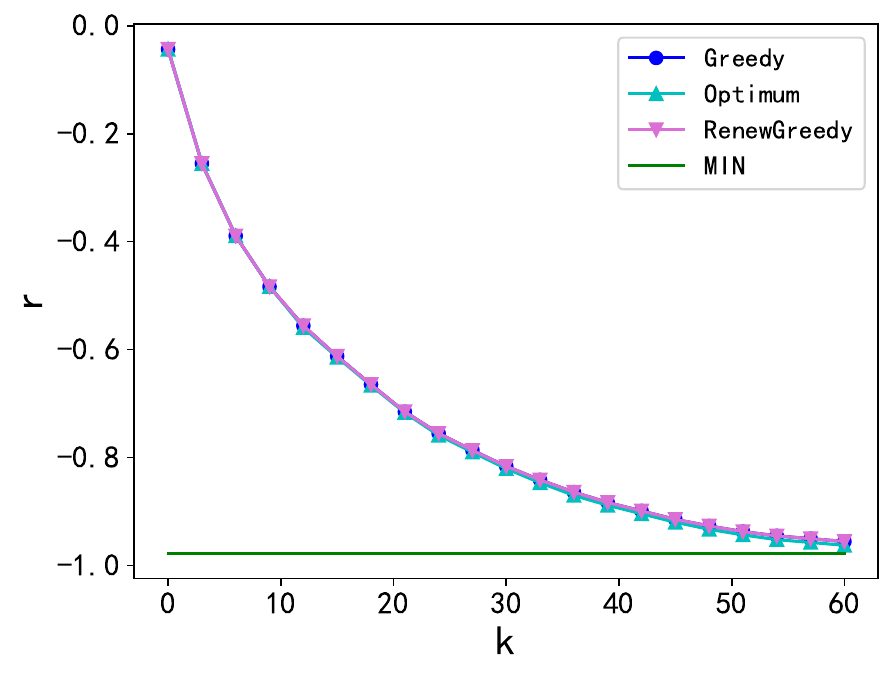}}
	\caption{The assortativity coefficient as a function of number $k$ of rewiring edges pairs for GRS, RenewGRS, the optimum solution, and the maximum or minimum assortativity coefficient on two networks.
 %: Karate(a), and Dolphin(b).
 }
	\label{fig:22} 
\end{figure}
As shown in Figure \ref{fig:22}, as the value of $k$ increases, all three algorithms converge and approach a stable state. Furthermore, they closely approximate the maximum or minimum assortativity coefficient. 
When comparing the Optimum and GRS algorithms, we observe that our greedy algorithm performs exceptionally well, yielding results that are nearly identical to the optimal solution. 
Upon Comparing the GRS and RenewGRS algorithms, we find no significant differences, providing validation for some of our assumptions that newly generated edges are typically not selected for further rewiring. %Moreover, our algorithm achieves results very close to the maximum or minimum assortativity coefficient, further reinforcing the reasonableness of our assumption. 
Additionally, the results suggest that performing a limited number of edge pair rewirings, typically less than half of the total number of edges, is adequate to achieve a network that is close to the maximum or minimum degree correlation.
\begin{figure*}[ht]%调节图片位置，h：浮动；t：顶部；b:底部；p：当前位置]
	\centering
	\subfloat[ER]{
		\includegraphics[scale=0.27]{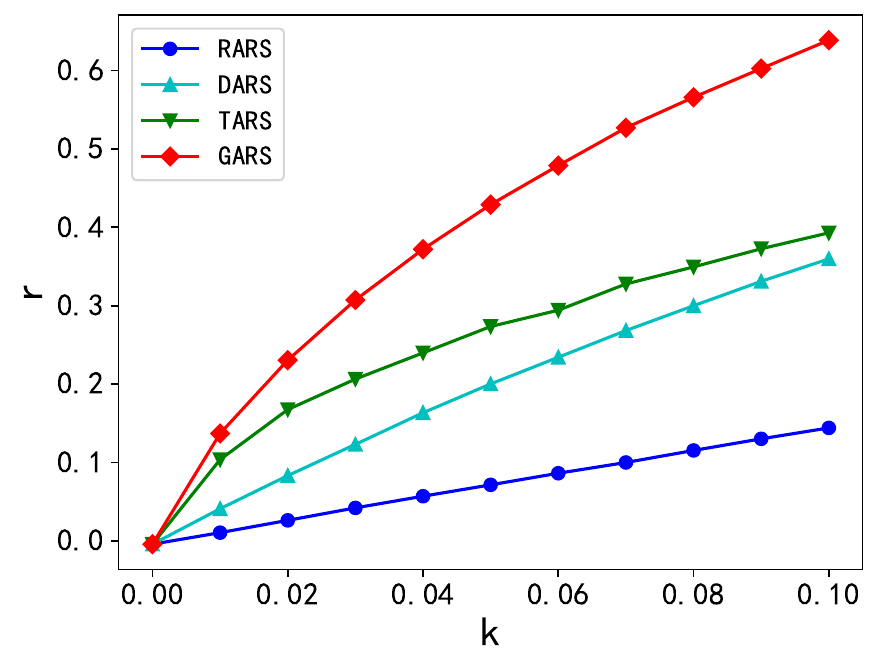}}
  \hspace{2cm}
	\subfloat[WS]{
		\includegraphics[scale=0.27]{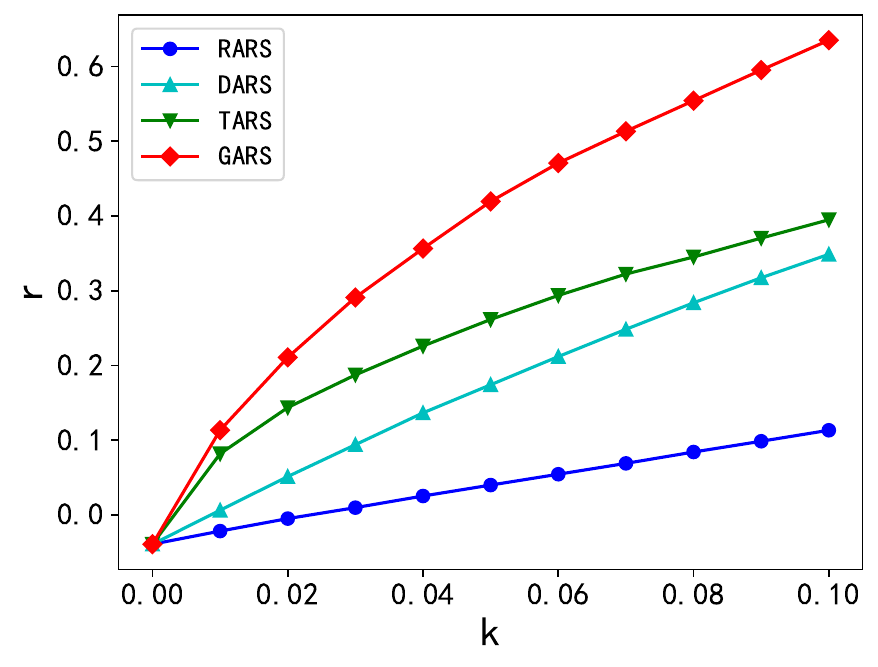}}
  \hspace{2cm}
        \subfloat[BA]{
		\includegraphics[scale=0.27]{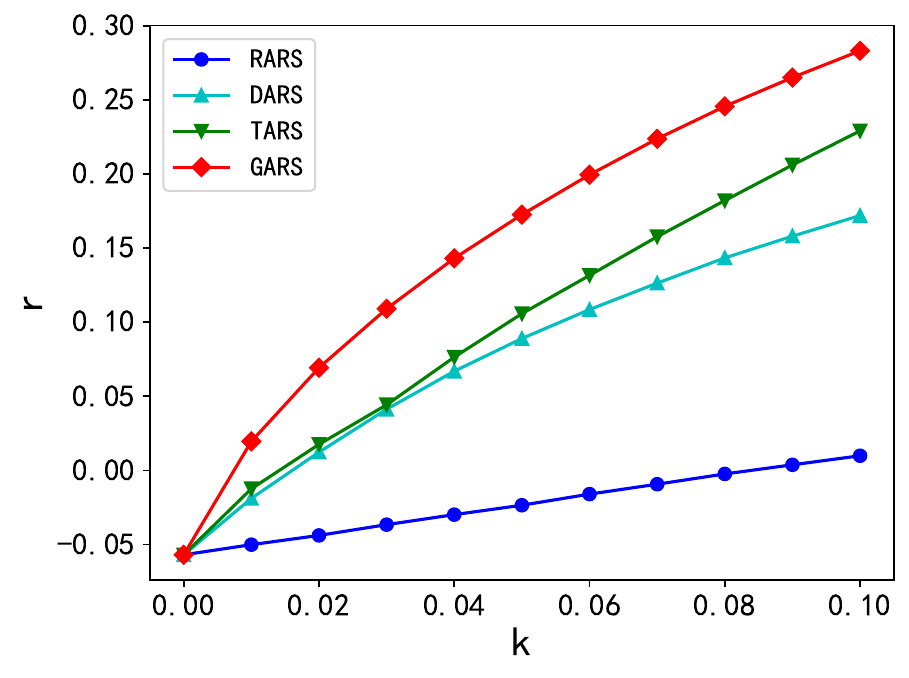}}
  \vspace{-0.3cm}
	\\
	\subfloat[Powergrid]{
		\includegraphics[scale=0.27]{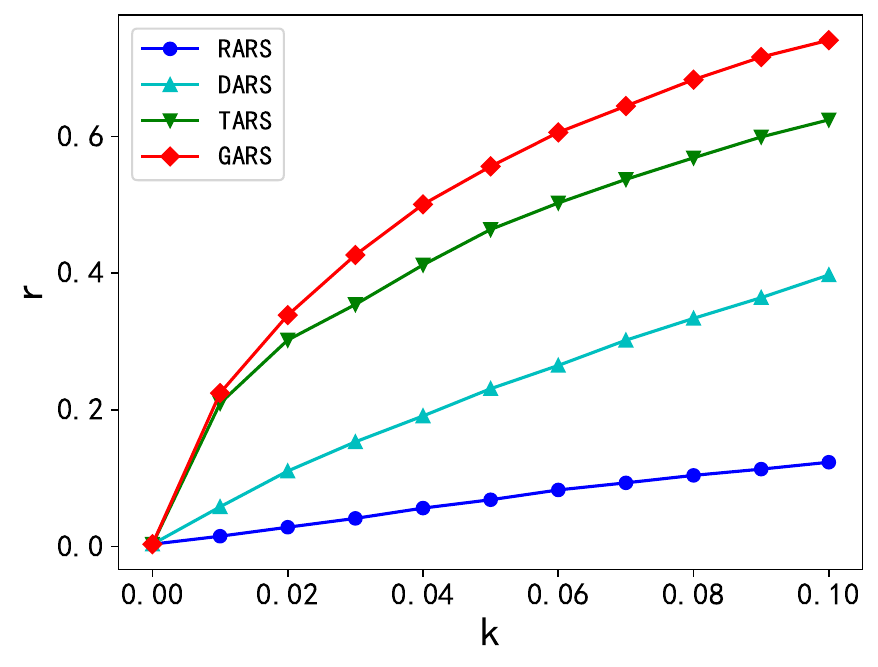}}
  \hspace{2cm}
	\subfloat[Netscience]{
		\includegraphics[scale=0.27]{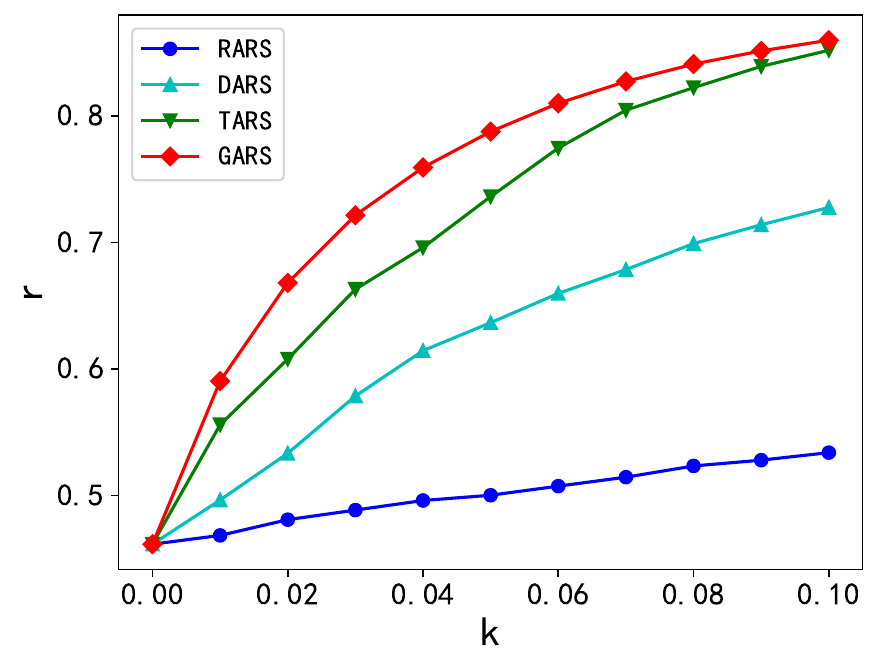}}
  \hspace{2cm}
        \subfloat[Metabolic]{
		\includegraphics[scale=0.27]{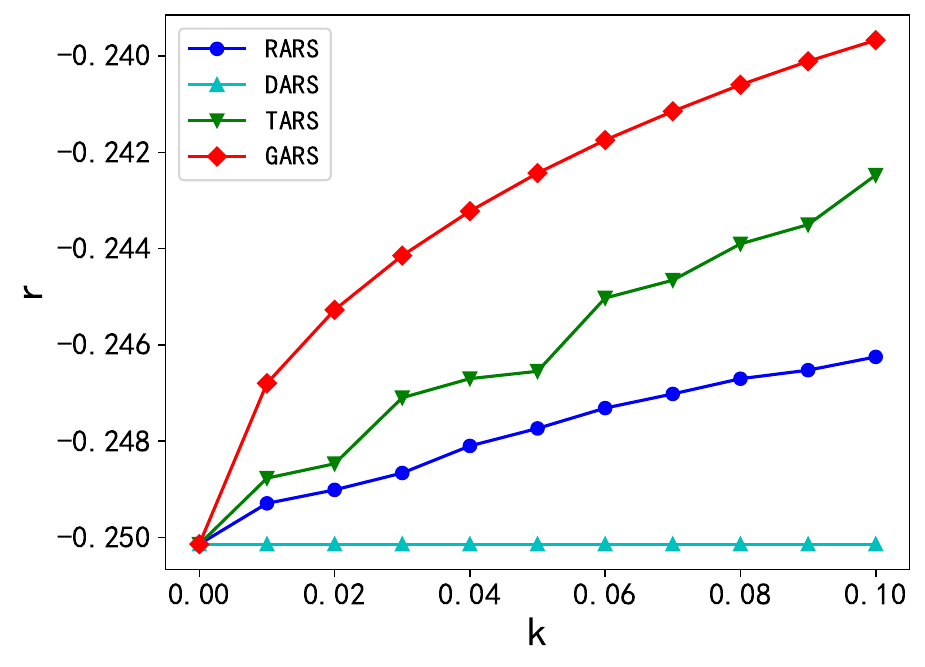}}
	\caption{Variations of assortativity coefficient in synthetic and real networks with respect to the percentage of rewired edges pairs under RARS, DARS, TARS, and GARS strategies. %ER(a), WS(b), BA(c), Powergrid(d), Netscience(e), Metabolic(f).
 }
	\label{fig:1} 
\end{figure*}
\begin{figure*}[!t]%调节图片位置，h：浮动；t：顶部；b:底部；p：当前位置]
	\centering
	\subfloat[ER]{
		\includegraphics[scale=0.27]{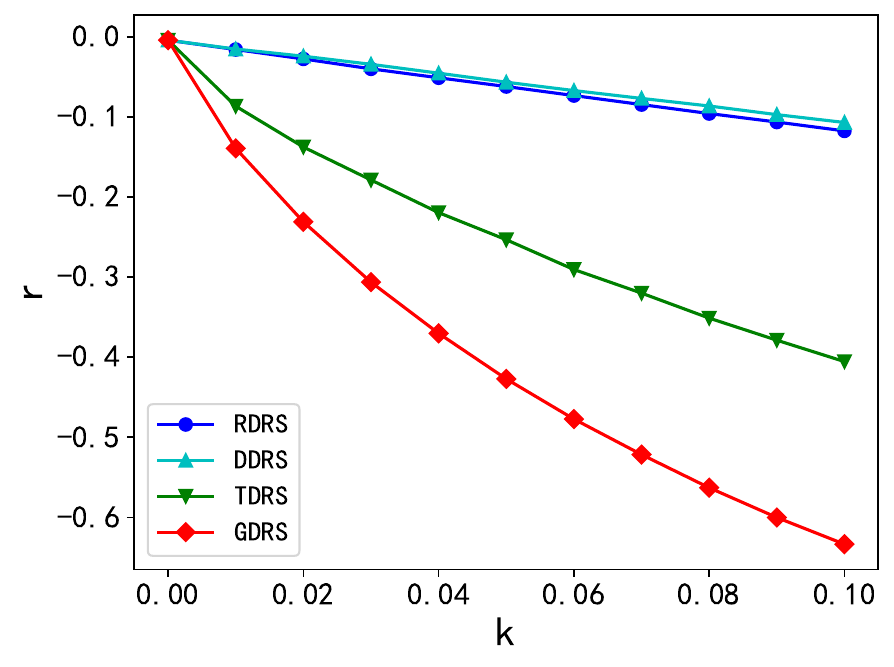}}
  \hspace{2cm}
	\subfloat[WS]{
		\includegraphics[scale=0.27]{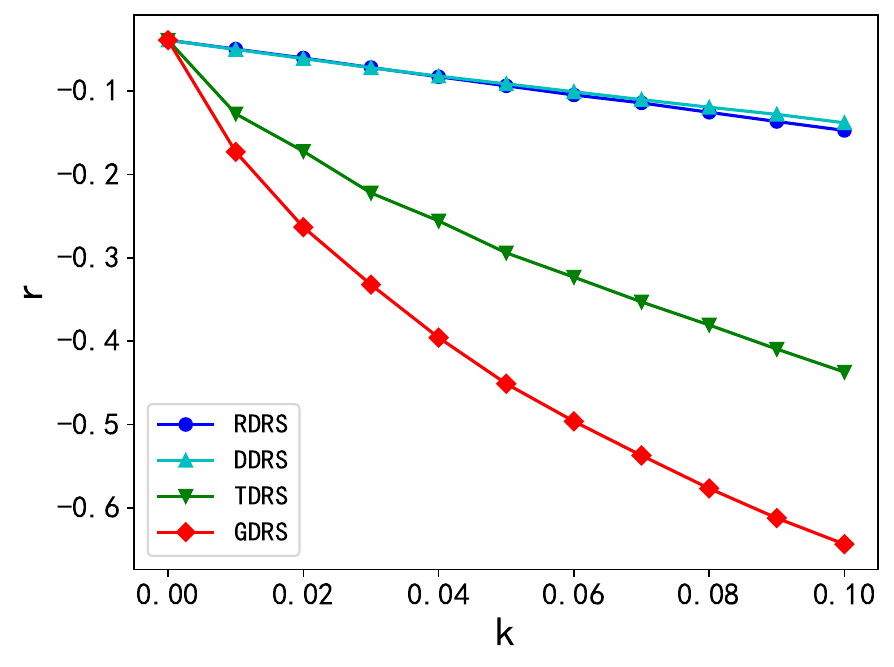}}
  \hspace{2cm}
        \subfloat[BA]{
		\includegraphics[scale=0.27]{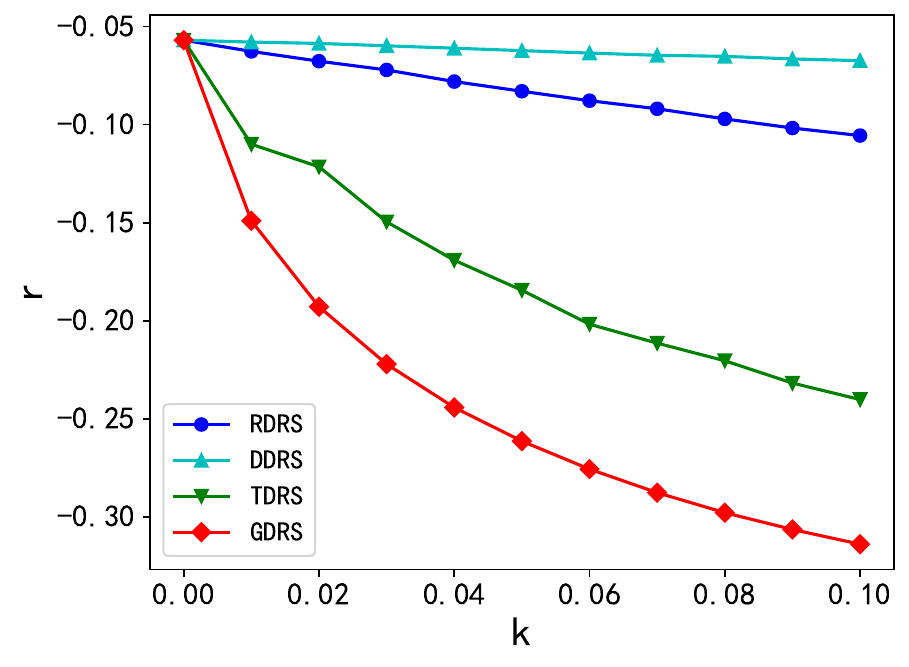} }
  \vspace{-0.3cm}
	\\
	\subfloat[Powergrid]{
		\includegraphics[scale=0.27]{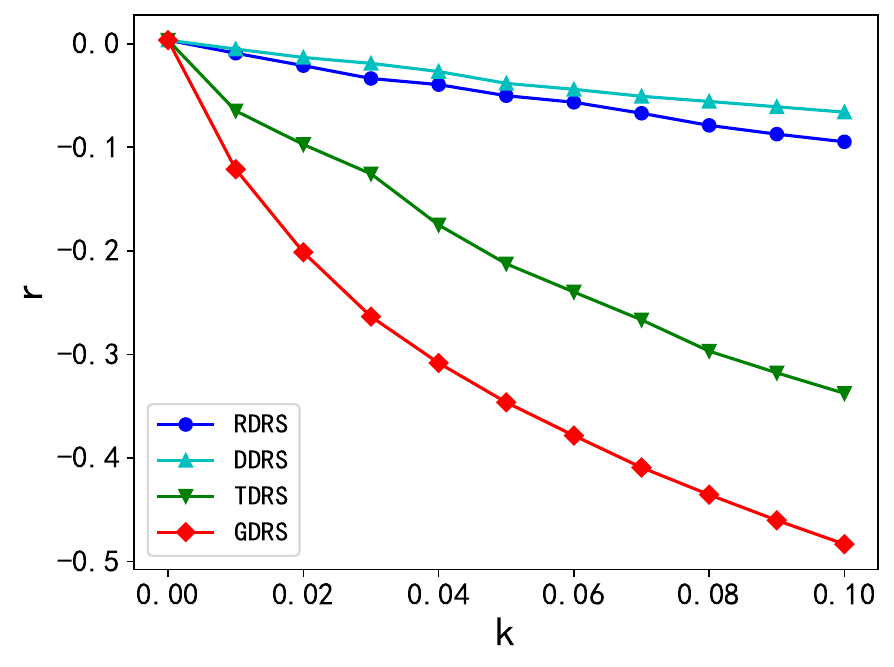}}
  \hspace{2cm}
	\subfloat[Netscience]{
		\includegraphics[scale=0.27]{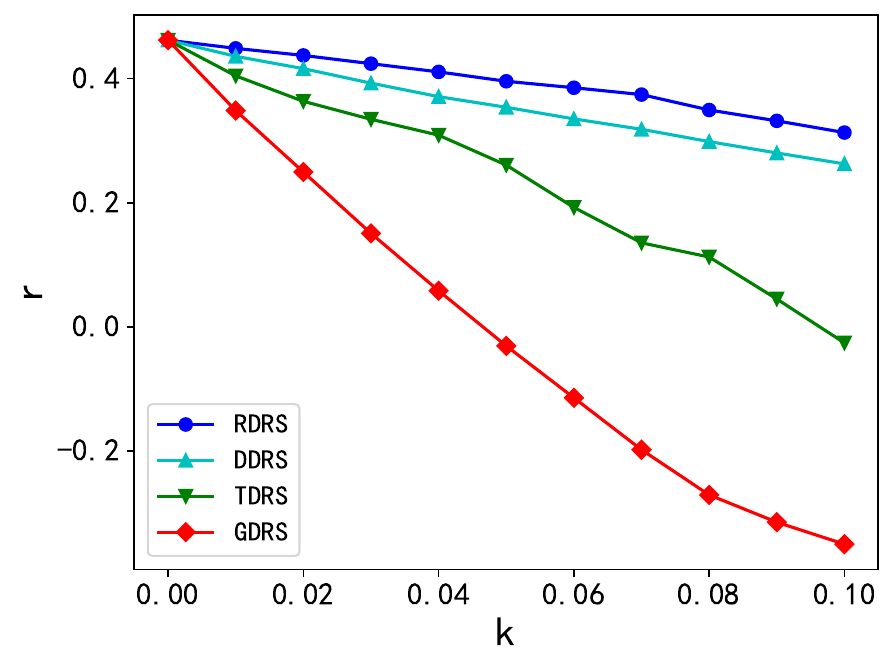}}
  \hspace{2cm}
        \subfloat[Metabolic]{
		\includegraphics[scale=0.27]{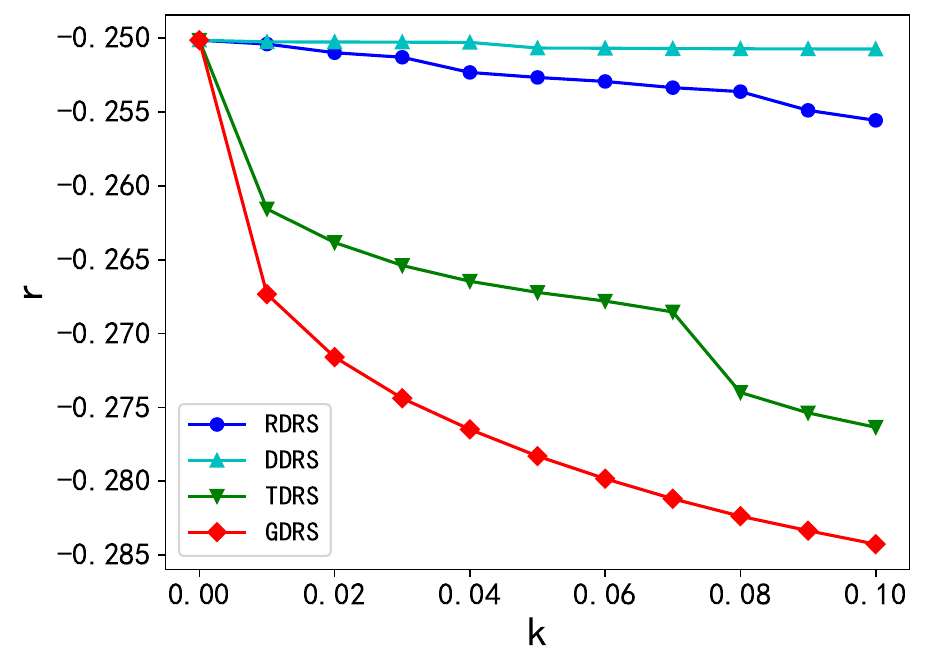}}
	\caption{Variations of assortativity coefficient in synthetic and real networks with respect to the percentage of rewired edges pairs under RDRS, DDRS, TDRS, and GDRS strategies. %ER(a), WS(b), BA(c), Powergrid(d), Netscience(e), Metabolic(f).
 }
	\label{fig:2} 
\end{figure*}
\subsection{The Comparison with Alternative Baselines}

To demonstrate the efficiency of our algorithm, we conducted a comparative analysis with TRS, RRS and DRS.

\textbf{Strategy 2 (Target Rewiring Strategy, TRS)}

Geng~\emph{et al.}\cite{geng2021global} proposed a target disassortative rewiring strategy (TDRS) that focuses on creating a disassortative network by prioritizing connections between high-degree nodes and low-degree nodes. Building on a similar strategy, we introduce a target assortative rewiring strategy(TARS) that prioritizes connections between high-degree nodes, thereby promoting assortativity in the network.
% The details are described below.

% \textbf{Target assortative rewiring strategy(TARS):}Given a budget and arrange the nodes in the network in descending order of degree, $\{k_1,k_2,...,k_n\}$, where $k_1 \geq k_2 \geq ,..., \geq k_n$. First, select node $i$ corresponding to $k_1$, and node $k$ corresponding to $k_2$. Then, select node $j$ and node $l$, the neighbors of $i$ and $k$, respectively, that satisfy the following conditions: 1.$(i,k)$ and $(j,l)$ do not already form edges in the network. 2. The quantity $value_{\{(i,j),(k,l)\}}$ is positive and maximal. After rewiring $(i,k)$ and $(j,l)$, the budget is decreased by one. If node $k$ does not satisfy the conditions, then node corresponding to $k_3$ is chosen as $k$, and so on. For node $a$ corresponding to degree $k_1$, this process is repeated $k_1$ times, connecting $i$ to nodes with larger degrees as much as possible, and this is repeated for subsequent nodes.

\textbf{Strategy 3 (Random Rewiring Strategy, RRS)}

Xulvi~\emph{et al.}~\cite{xulvi2005changing} proposed a random rewiring that preserves the degree distribution of a network. Two edges are randomly selected from the network. First, four nodes corresponding to the two edges are sorted by degree $k_1 \geq k_2 \geq k_3 \geq k_4$. If we use the random assortative rewiring strategy(RARS), we will connect the nodes corresponding to pairs $(k_1, k_2)$ and $(k_3, k_4)$. If we use the Random assortative rewiring strategy (RARS), we will connect the nodes corresponding to pairs $(k_1, k_4)$ and $(k_2, k_3)$.

% \textbf{Random assortative rewiring strategy(RARS):}
% Disconnect two edges, connect node $k_1$ with node $k_2$, and connect node $k_3$ with node $k_4$ to increase the network's assortativitive degree. 

% \textbf{Random disassortative rewiring strategy (RDRS):}
% Disconnect two edges, connect node $k_1$ with node $k_4$, and connect node $k_2$ with node $k_3$ to increase the network's disassortative degree. 

\textbf{Strategy 4 (Degree-diff Rewiring Strategy, DRS)}

In assortative networks, the degree difference between node pairs is small, whereas, in disassortative networks, the degree difference tends to be large. On the basis of this observation, we propose a heuristic algorithm that considers the degree of difference between nodes. For each edge in the graph, calculate the degree difference between the two corresponding nodes, record the degree difference value of each edge as $diff_{e_{ij}}$, and sort the edges in descending order according to their degree differences as $diff$. If we use the Degree-diff assortative rewiring strategy(DARS), randomly select one pair of edges from the top 30\% of the $diff$ list and perform the rewiring process using the same method as RARS. If we use the Degree-diff disassortative rewiring strategy(DDRS), randomly select one pair of edges from the last 30\% of the $diff$ list and perform the rewiring process using the same method as RDRS.

\subsection{Results and Analysis}
To compare the effectiveness of the GRS with the aforementioned baseline methods, we carried out experiments on a subset of synthetic and real-world networks as shown in Table \ref{tab1}. The maximum attack budget in our experiments was set at $10\%$ of the total number of edges in the network. It should be noted that for random selection algorithms like RRS, we performed 100 iterations and averaged the results.

The results, as shown in Figures \ref{fig:1} and \ref{fig:2}, consistently demonstrate that GRS achieves the best performance among all algorithms. Each algorithm exhibits similar effects on ER networks and the corresponding effects on WS networks, indicating that although these networks possess different topological structures and degree distributions, these properties do not significantly alter their degree correlation. Moreover, compared to BA networks, ER and WS networks are more sensitive to rewiring. In the case of GRS, the changes of the assortativity coefficient in BA networks range roughly between $-0.3$ and $0.3$, while in the ER and WS networks, the changes range approximately between $-0.6$ and $0.6$.

In the Powergrid and Netscience networks, which initially exhibit assortative or neutral assortativity, GRS effectively modifies the network's assortativity coefficient. However, for the Metabolic network, which initially demonstrates disassortative due to its truncated structure, it is challenging to significantly change its assortativity through rewiring. In the case of GARS, the assortativity coefficient of the Metabolic network only improved by $0.011$. 

We also observed that the strategy based on the degree difference yielded similar results to the random strategy in terms of inducing network disassortativity. It is because edges with small degree differences only ensure similarity in the degrees of the two ends, without capturing the degree magnitudes. Consequently, when rewiring, it is possible to select pairs of edges with small degree differences but similar degrees, which do not significantly alter the network's assortativity coefficient.
\section{Conclusion}\label{thi}
We analyzed the factors that influence changes in the assortativity coefficient under degree-preserving conditions. Based on our assumptions, we formulate the problem of maximizing or minimizing the assortativity coefficient and verify its monotonic submodularity. From this we proposed a greedy rewiring strategy. Our experimental results provided strong evidence supporting the validity of our assumptions and demonstrated that our algorithm achieves results very close to the optimal solution. Our algorithm exhibited the best performance in both synthetic and real networks.

%The experimental results showed that GRS had the best performance on both synthetic and real networks, and it often compute an optimal solution. 
% \newpage
%\section{References Section}
% \begin{thebibliography}{1}
% \bibliographystyle{IEEEtran}
%\balance
\bibliography{ref}
\bibliographystyle{IEEEtran}

\end{document}